

\documentclass[conference,twocolumn,final]{IEEEtran}
\makeatletter
\def\ps@headings{%
\def\@oddhead{\mbox{}\scriptsize\rightmark \hfil \thepage}%
\def\@evenhead{\scriptsize\thepage \hfil \leftmark\mbox{}}%
\def\@oddfoot{}%
\def\@evenfoot{}}
\makeatother \pagestyle{headings}

\usepackage[final]{graphicx}
\usepackage[reqno]{amsmath}
\usepackage{amssymb}
\usepackage{cite}
\usepackage{balance}
\usepackage{color}
\usepackage{caption}

\usepackage[german,vlined,boxed]{algorithm2e}
\newenvironment{algorithmic}{%
\algorithm
}{%
\endalgorithm
}
\usepackage{floatrow}

\floatsetup[table]{capposition=top}

\captionsetup{font={scriptsize}}


\newfont{\bbb}{msbm10 scaled 500}

\newfont{\bb}{msbm10 scaled 1100}




\newcommand{\Fc}{{\cal F}}


\newcommand{\argmax}{\operatornamewithlimits{argmax}}

\newtheorem{theorem}{Theorem}
\newtheorem{proposition}[theorem]{Proposition}
\newtheorem{definition}[theorem]{Definition}

\IEEEoverridecommandlockouts



\author
{Mehmet Karaca, Tansu Alpcan, Ozgur Ercetin}
\title{Smart Scheduling and Feedback Allocation over Non-stationary Wireless Channels
\thanks{
Mehmet Karaca and, Ozgur Ercetin are with the Faculty of Engineering
and Natural Sciences, Sabanci University, 34956 Orhanli-Tuzla,
Istanbul/Turkey. Email: mehmetkrc@su.sabanciuniv.edu,
oercetin@sabanciuniv.edu.}
\thanks{
Tansu Alpcan is with the Dept. of Electrical and Electronic
Engineering, The University of Melbourne, Australia. Email:
tansualpcan@gmail.com }
\thanks{This work is supported in part by European Commission IRSES program under grant AGILENet.}}

\begin{document}
\maketitle


\begin{abstract}
It is well known that opportunistic scheduling algorithms are
throughput optimal under dynamic channel and network conditions.
However, these algorithms achieve a hypothetical rate region which
does not take into account the overhead associated with channel
probing and feedback required to obtain the full channel state
information at every slot. In this work, we design a joint
scheduling and channel probing algorithm by considering the overhead
of obtaining the channel state information. We adopt a correlated
and non-stationary channel model, which is more realistic than those
used in the literature. We use concepts from learning and
information theory to accurately track channel variations to
minimize the number of channels probed at every slot, while
scheduling users to maximize the achievable rate region of the
network. Simulation results show that with the proposed algorithm,
the network can carry higher user traffic.

\end{abstract}


\section{Introduction}
\label{sec:intro} In wireless networks, the channel conditions are
time-varying due to the fading and shadowing. Opportunistic
scheduling algorithms take advantage of favorable channel conditions
in assigning time slots to users. Optimal scheduling in wireless
networks has been extensively studied in the literature under
various assumptions. The seminal work by Tassiulas and Ephremides
have shown that a simple opportunistic algorithm that schedules the
user with the highest queue backlog and transmission rate product at
every time slot, can stabilize the network, whenever this is
possible~\cite{MW}.

A common assumption in the literature on opportunistic algorithms is
that the {\em exact} and {\em complete} channel state information,
(CSI) of all users is available at every time slot. Hence, these
algorithms achieve a \textit{hypothetical} rate region by assuming
that full channel state information is available without any channel
probing or feedback costs.  However, in practice acquiring CSI
introduces significant overhead to the network, since CSI is
obtained either by probing the channel or via feedback from the
users. In current wireless communication standards such as WiMax and
LTE there is a feedback channel used to relay CSI from the users to
base station. Obviously, this feedback channel is bandlimited and it
is impossible to obtain CSI from all users at the same slot.

Another common assumption that does not hold in practice is the
wireless channel being independent and identically distributed
(iid), and being governed by a \textit{stationary} stochastic
process. The most common assumption is that the channel can be
modeled by a stationary Markov chain. The measurement study
in~\cite{Bozidar:nonstationary11} shows that the wireless channel
exhibits time-correlated and non-stationary behavior.

In this work, we develop a joint scheduling and channel probing
algorithm for time-correlated and non-stationary channels. The
channel probing is based on Gaussian Process Regression (GPR)
technique~\cite{Rasmussen:GP}, which is used to learn and track the
wireless channel. The scheduling part is based on well known
Max-Weight algorithm. The joint algorithm dynamically determines the
set of channels that must be probed at every time slot based on the
information obtained from the previous channel observations, and
then schedules a node based on the obtained CSI and queue states. We
show that GPR-based probing works well for realistic,
time-correlated and non-stationary wireless channels at
significantly lower probing cost.

Our contributions are summarized as follows:
\begin{itemize}
\item We use information theoretical concepts to quantify the
uncertainty in the channel state under finite and infrequent
measurements.
\item Based on the work in~\cite{Alpcan:Valuetools11}, Gaussian Process Regression learning algorithm is proposed to track the channel evolution.
\item A joint scheduling and probing algorithm is proposed in which the subset of users
probed at every slot is adaptively selected based on the dynamics of
the channel processes.
\item We implement a realistic network setting where we simulate High Data Rate (HDR) protocol in CDMA cellular
networks, and wireless channel is modeled as time-correlated and
non-stationary.  We show by numerical analysis that when our
proposed algorithm is used the network can carry higher user traffic
compared to Max-Weight algorithm with full CSI.
\end{itemize}

\section{Related Works}
\label{sec:related} 
In~\cite{Gopalan:allerton07}, the authors propose a
throughput-optimal algorithm when channel distribution is known.
In~\cite{Ouyang:mobihoc11}, the authors present a joint algorithm
for multi-channel system with limited feedback bandwidth. The joint
scheduling and probing  problem is transformed to multi armed bandit
problem in~\cite{Quyang:infocom11}. In~\cite{Chaporkar:mobihoc09},
channel probing is performed at the beginning of transmission by
taking a portion of time slot. Then, the problem of finding optimal
joint algorithm is transformed into an optimal stopping time problem
and is solved by Markov Decision Process (MDP).  Aforementioned
works assume that the underlaying stochastic process of the channel
evolves according to a fixed stationary process such as ergodic
Markov chain. In practice, such an assumption does not hold most of
the time. In addition, the authors in~\cite{Hallen:fading} proposed
a technique to estimate future values of the fading coefficient of a
non-stationary channel. The proposed technique is based on the
autoregressive (AR) model with order $p$. According to AR model, the
current CSI of a user can only be determined when $p$ previous CSIs
of that user are given. Regarding to a joint scheduling and probing
problem, the corresponding user must be probed at every previous $p$
slots to obtain $p$ previous CSIs. However, a joint algorithm does
not necessarily probe a user at every time slot. Therefore, the
proposed technique is not suitable for a scheduling problem with
limited feedback.

There are very few studies which propose a scheduling algorithm for
non-stationary channels. In~\cite{Andrews:Nonstationary}, the
authors showed that with Max-Weight algorithm the average queue
sizes increases exponentially with the number of users. It was
assumed that channel state information of each user is available at
the scheduler at every time slot.

\section{System Model and Problem Formulation}
\label{sec:model} We consider a cellular system with a single base
station transmitting to $N$ users. Let $\mathcal{N}$ denote the set
of users in the cell. Time is slotted, $t\in \{0,1,2,\ldots\}$, and
wireless channel between the base station and a mobile user is
modeled as a time-correlated fading process. The gain of the channel
is constant over the duration of a time slot but varies between
slots. Let $C_n(t)$ denote CSI of user $n$ at time slot $t$.
$C_n(t)$ is a random process which may or may not have a stationary
probability distribution. Let $c_n(t)$ represent the realization of
$C_n(t)$ at time $t$, $n\in \{1,2,\dots,N\}$. In the rest of the
paper, we use channel and user interchangeably.

Let $\mu_n(c_n(t))$, or simply $\mu_n(t)$, denote the transmission
rate of user $n$ which depends on CSI of that user, and is bounded
as $\mu_{min}<\mu_n(t)<\mu_{max}$. We assume that at each time slot
at most one user can be scheduled to receive data from the base
station. The base station transmits to users at  fixed power, so
transmission rate of each user only depends on $c_n(t)$.

Let $a_n(t)$ be the amount of data (bits or packets) arriving into
the queue of user $n$ at time slot $t$. We assume that $a_n(t)$ is a
stationary process and it is independent across users and time
slots. We denote the arrival rate vector as $\boldsymbol
\lambda=(\lambda_1,\lambda_2,\cdots,\lambda_N)$, where $\lambda_n =
{\mathbb E}[a_n(t)]$. Let $\boldsymbol
q(t)=(q_1(t),q_2(t),\cdots,q_N(t))$ denote the vector of queue
sizes, where $q_n(t)$ is  the queue length of user $n$ at time slot
$t$.
\begin{definition}
A queue is strongly stable if
\begin{equation}
\limsup_{t\rightarrow \infty}\frac{1}{t}
\sum_{\tau=0}^{t-1}\mathbb{E}(q_n(t)) < \infty \label{eq:defination}
\end{equation}
\end{definition}
Moreover, if every queue in the network is stable then the network
is called stable.

The operation of the system is as follows.  At the beginning of a
time slot, CSI of a subset of users is obtained by the base station.
Then, the base station schedules a single user out of this subset
for transmission. Here, the only overhead we take into account is
the channel bandwidth and the time used for obtaining CSI. We
consider {\em dynamic} feedback channel allocation as follows:

\textbf{Dynamic feedback channel allocation model:} According to
this model, there is no dedicated feedback channel, and CSI is
relayed over the data channel. Hence, depending on the needs of the
algorithm CSI from varying number of users can be obtained. We
quantify the overhead of obtaining the CSI of a single user in terms
of a time fraction of the time slot. This time duration may include
the time spent for pilot signal transmission, measurement of the
signal strength of pilot signal and the transmission of CSI to the
base station. Assume that $\beta$ fraction of the time slot is
consumed to obtain CSI from a single user. Hence, only
$(1-m\beta)\times T_s$ seconds are available for data transmission
when $m$ users are probed. The amount of data that can be
transmitted by user $n$ is given by,
\begin{align}
d_{n}(t)=(1-m\beta) T_s\times \mu_n(t). \label{eq:d2}
\end{align}

The joint policy $\pi$ selects the triplet $(n,m,S_m)$ under this
model at each time slot $t$, where $n$ is the scheduled user, $S_m$
is the set of probed users and $m$ is the number of users in $S_m$.
We assume that the scheduled user at slot $t$ is selected among the
users probed, i.e., $n\in S_m$. Given $\pi=(n,m,S_m)$, $n$ is
determined according to Max-Weight rule, i.e.,
\begin{align}
n=\argmax_{i \in S_m}\ q_i(t)d_{i}(t).\label{eq:mw-model1}
\end{align}
Let $\Fc$ be the set of feasible policies at a given time slot and
$\pi\in \Fc$.

The amount of data that is transmitted by user $j$ at time slot $t$
under the joint scheduling and probing policy $\pi$,  is given as
follows,
\begin{align}
r_{j}(\pi,t) =& \left\{ \begin{array}{l l}
                    d_{j}(t)              & \text{; if user $j=n$}\\
                    0        & \text{; otherwise}
                \end{array} \label{eq:r2}
    \right.
\end{align}
The dynamics of the queue of user $n$ under scheduling policy $\pi$,
is,
\begin{align}
q_n(t+1)=\max(q_n(t)+a_n(t)-r_{n}(\pi,t))^+ .
\end{align}
where $(x)^+=\max(x,0)$.

\subsection{Problem Formulation} \label{sec:prob_for} We present the
following definitions before discussing the problem formulation.
\subsection{Hypothetical and Functional Rate Regions}
The \textit{achievable rate region} (shortly, rate region) of a
network is defined as the closure of the set of all arrival rate
vectors $\boldsymbol \lambda$ for which there exists an appropriate
scheduling policy that stabilizes the network.
\begin{definition}
$\Lambda_{un}$ is the hypothetical rate region where full CSI is
available (e.g. by an Oracle) without any channel probing or
feedback costs.
\end{definition}

\begin{definition}
$\Lambda_{full}$ is the achievable rate region when probing cost is
taken into account and when all users' channels are probed at every
time slot according to \textit{dynamic} feedback model.
\end{definition}


\begin{definition}
$\Lambda$ is the  rate region under \textit{dynamic} feedback model
when CSI from a subset of users is available.
\end{definition}

\subsection{Optimization Given the Steady-state Channel Distribution}
\label{sec:actual_prob} Our aim is to find a joint scheduling and
channel probing policy that stabilizes the network for a given set
of arrival rates within achievable rate region $\Lambda$ by
dynamically determining a subset of channels probed, and by
scheduling a user from this subset at every time slot. Given the
queue state $\boldsymbol q(t)$, we consider the following
optimization problem:
\begin{align}
 \max_{\pi \in \Fc}&\left\{ {\mathbb E}\left[\sum_{n=1}^N
q_n(t)r_{n}(\pi,t) |\boldsymbol  q(t)\right] \right\},
 \label{eq:problem}
\end{align}

\subsection{Tracking the Instantaneous Channel States}
\label{sec:learn_track} In practice, it is not possible to
accurately determine the exact channel distributions a priori to
system operation. Hence, we propose to use a learning algorithm to
track the channel evolution. Let $\hat{c}_n(t)$ denote the estimated
CSI of user $n$ at the beginning of time $t$. Let $\hat{\mu}_n(t)$
denote the estimated transmission rate of user $n$ at time $t$. One
can replace the actual rates $\mu_n(t)$ by $\hat{\mu}_n(t)$ to
obtain a new set of policies
$\hat{\pi}=(\hat{n},\hat{m},\hat{S}_{\hat{m}})$ according to
\textit{dynamic} feedback model. Also let $\hat{d}_{n}(t) $,
 denote  the amount of data that can be transmitted by user
$n$ by using $\hat{c}_n(t)$ at time $t$. Similarly, the estimated
service rate $\hat{r}_{n}(\hat{\pi},t)$ is defined according to
\eqref{eq:r2} by replacing $d_{n}(t)$ with $\hat{d}_{n}(t)$.

The quality of the estimate of an instantaneous channel state
depends on which users are probed at each slot, i.e.,
$\hat{S}_{\hat{m}}$. Here, we design a joint algorithm that takes
past observations of the channels as an input and determines a
subset of users to be probed at time $t$ so that the channel
estimation error is minimized and the rate region is maximized.

\subsection{Multi-objective Dynamic Network Control}
\label{sec:ourproblem} Note that channel estimation is inherently
error-prone.  The degree of uncertainty in the estimate of the
current channel state depends on the previous channel observations,
and the dynamics of the channel. In this context, we define
\textit{information of an unexplored channel} as the uncertainty in
the channel state given its past observations.  This information can
be exactly quantified by using the entropy definition given by
Shannon. Accordingly, the scalar quantity $I_n(t)$ denotes the {\em
information} of channel state of user $n$ at the beginning of time
slot $t$ given past observations of the channel. For instance, the
information about a channel whose state was observed recently and
many times before is less than the channel which has not been probed
for a long time, since the uncertainty in the state of the latter is
higher.

Hence, we have two objectives.  First one is to schedule users so
that stability of the network is preserved.  The second closely
related objective is to probe users to acquire as much information
about their current channel state as possible.
\begin{itemize}
\item objective 1: $\max \sum_{n=1}^N
q_n(t)\hat{d}_{n}(t)$\\
\item objective 2: $\max \sum_{n=1}^N I_n(t)$
\end{itemize}
We seek a joint feasible policy $\hat{\pi}$ which determines a
subset of users probed by considering both objectives, and schedules
a user out of this subset according to Max-Weight algorithm. The
most common approach to find the solution of multi-objective
optimization problems is the weighted sum method. The problem under
\textit{dynamic} feedback model is given with a constraint which
ensures at most $M$ channels are probed at a given time slot:
\\
\textbf{Problem :}
\begin{align}
\max_{\hat{\pi} \in \Fc} \ &\sum_{n=1}^N \alpha_1
q_n(t)\hat{r}_{n}(\hat{\pi} ,t) +\alpha_2 I_n(t)  \label{eq:problem_m2}\\
&\text{s.t.}\ m \leq M,\notag
\end{align}
Note that the scheduling and probing decision depends not only on
the queue sizes and the estimated channel rates as in the original
Max-Weight algorithm, but also on the uncertainty in each channel
state given its past observations. Also, \eqref{eq:problem_m2}
exhibits the well-known ``Exploration vs. Exploitation" trade-off,
since the first term in the summation aims to stabilize the network
while the second term aims to maximize the information collected
about the channel states. In the following sections, we deal with a
modified version of this problem, where we divide the objective
function in \eqref{eq:problem_m2} by $\alpha_1$, and define a single
weight $\xi=\frac{\alpha_2}{\alpha_1}$. Note that when $\xi$ is
tuned to higher (lower) values, the channels are probed more (less)
frequently.

\section{Estimation of CSI with GPR}
\label{sec:gpr} The  problems given in \eqref{eq:problem_m2}
involves estimating $\hat{d}_{n}(t)$ from a set of past channel
observations. The problem of predicting or forecasting the value of
a variable from observations of other dependent variables is called
regression. There is a plethora of work for carrying out regression
analysis.  In this work, we employ Gaussian Process Regression (GPR)
as the technique for channel estimation. Before explaining how
channel state is estimated with GPR in detail, we first give the
main reasons behind this choice.
\begin{itemize}
\item  One of the well known methods is autoregressive (AR)
model-based techniques or linear regression. AR is parametric, in
that the channel function is defined in terms of a finite number of
unknown parameters. However, determining these parameters is a
difficult task especially when collecting data is costly and the
function varies over time. GPR is a nonparametric regression method
model. Thus, it can offer a more flexible framework for unknown
nonlinearities.

\item In contrast to other regression models, GPR provides a
simple way to measure the uncertainty in the estimation for any
given any set of CSI observations. AR model is lack of providing an
analytical way to measure the uncertainty of the estimation which is
important for our scheduling algorithm and we will mention next.

\item  The most attractive reason is that GPR can give
decisions with only using the most recent channel observations. This
is especially important for non-stationary channels, since previous
channel observations may become outdated and may not give much
information about current condition.
\end{itemize}

Let $\mathcal{D}_n(t)=(\textbf{c}_n,\boldsymbol \tau_n)$ denote the
set of observations for channel $n$ at the beginning of time slot
$t$, where $\textbf{c}_n=\{c_n^1, c_n^2, \dots, c_n^w\}$ denotes the
set of latest  $w$ CSI values taken at times, $\boldsymbol
\tau_n=\{\tau_n^1,\tau_n^2,\dots,\tau_n^w\}$, and $\tau_n^i < t$,
$\forall \tau_n^i \in \boldsymbol \tau_n$, $i\in \{1,2,\dots,w\}$.
We use GPR to predict the value of CSI, i.e., $\hat{c}_n(t)$ at the
beginning of time slot $t$, given $\mathcal{D}_n(t)$.

Let $p(c_n(t)| t,\mathcal{D}_n(t))$ be a posterior distribution of
channel $n$. According to GPR, a posterior distribution is Gaussian
with mean $\hat{c}_n(t)$ and variance $v_n(t)$. Specifically,
Gaussian process is specified by the kernel function, $k_n(\tau_n^i,
\tau_n^j)$, that describes the correlation of channel $n$ between
two of its measurements taken at times $\tau_n^i$ and $\tau_n^j$. It
is possible to choose any positive definite kernel function.
However, the most widely used is the squared exponential, i.e.,
Gaussian, kernel:
\begin{align}
k_n(\tau_n^i,
\tau_n^j)=\exp\left[-\frac{1}{2}(\tau_n^i-\tau_n^j)^2\right].\label{eq:kernel1}
\end{align}
Given $\mathcal{D}_n(t)$,  $\hat{c}_n(t)$ and variance $v_n(t)$ are
determined as follows:
\begin{align}
\hat{c}_n(t)&=\textbf{k}_n^T(t)\textbf{K}_n^{-1}\textbf{c}_n,\label{eq:mean1}\\
v_n(t)&=k_n(t,t)-\textbf{k}_n^T(t)\textbf{K}_n^{-1}\textbf{k}_n(t),
\label{eq:var1}
\end{align}
where $\textbf{K}_n$ is a $w \times w$ matrix composed of elements
$k_n(\tau_n^i, \tau_n^j)$ for $1\leq i,j\leq w$ and
$\textbf{k}_n(t)$ is a vector with elements $k(\tau_n^i,t)$ for
$\forall \tau_n^i \in \boldsymbol \tau_n$. Hence, the network
scheduler can easily predict the CSI of users at time $t$ by using
\eqref{eq:mean1}. Furthermore, the variance $v_n(t)$ is used to
measure the level of uncertainty in the estimations, i.e., $I_n(t)$
as discussed next.

Recall that the \textbf{entropy} of a random variable $A$ is defined
as $H(A)=\sum_s p_s \log_s(\frac{1}{p_s})$, where  $p(.)$ is the
probability distribution function of $A$. In our context, the
current realization of CSI, i.e., $c_n(t)$, is a random variable.
Accordingly, let $H_n^0(c_n(t)| t,\mathcal{D}_n(t))$ and
$H_n^1(c_n(t)| t,\mathcal{D}_n(t))$ denote the entropy of the random
variable $c_n(t)$ before and after the probing, respectively when
$\mathcal{D}_n(t)$ is given. If channel $n$ is probed at time $t$,
then $H_n^1(c_n(t)| t,\mathcal{D}_n(t))$ will be zero since the
channel state is known exactly. Otherwise, the uncertainty
increases, i.e., $H_n^1(c_n(t)| t,\mathcal{D}_n(t))
>H_n^0(c_n(t)| t,\mathcal{D}_n(t))$. Hence, the information acquired by probing
channel $n$ is the reduction in its uncertainty, which is simply the
difference between its entropies before and after the probing:
\begin{align*}
I_n(t)=H_n^0(c_n(t)| t,\mathcal{D}_n(t)) -H_n^1(c_n(t)|
t,\mathcal{D}_n(t)).
\end{align*}
The following Proposition is similar to the one given
in~\cite{Alpcan:Valuetools11}, and establishes that information
obtained by probing a channel is equal to the variance of the
estimate of the state of that channel.
\begin{proposition}
\label{prop:valuetools} Given $\mathcal{D}_n(t), \forall n=1,\ldots,
N$, finding the channel that has the highest information at time
slot $t$ is equal to finding the channel which has the highest
variance at that time slot, i.e.,
\begin{align}
i^*=\argmax_{n \in \mathcal{N}} I_n(t)=\argmax_{n \in \mathcal{N}}
v_n(t). \label{eq:proposition}
\end{align}
\end{proposition}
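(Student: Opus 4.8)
The plan is to exploit the fact that, under GPR, the posterior law of the channel state is Gaussian, so that its entropy is a function of the posterior \emph{variance} alone — and a strictly monotone one. First I would recall from \eqref{eq:var1} that $p(c_n(t)\mid t,\mathcal{D}_n(t))$ is Gaussian with variance $v_n(t)$, whence its (differential) entropy before probing is
\begin{align}
H_n^0(c_n(t)\mid t,\mathcal{D}_n(t)) = \tfrac{1}{2}\log\!\left(2\pi e\, v_n(t)\right),
\end{align}
the standard closed form for the entropy of a Gaussian, valid for any base of the logarithm up to a positive multiplicative constant. The key observation is that the quantity inside does not depend on the posterior mean $\hat{c}_n(t)$ at all, only on $v_n(t)$.

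Next I would use the hypothesis that probing channel $n$ reveals $c_n(t)$ exactly, so the posterior collapses to a point mass and $H_n^1(c_n(t)\mid t,\mathcal{D}_n(t)) = 0$. Substituting into the definition $I_n(t) = H_n^0 - H_n^1$ yields $I_n(t) = \tfrac{1}{2}\log(2\pi e\, v_n(t))$. Since the map $v \mapsto \tfrac{1}{2}\log(2\pi e\, v)$ is strictly increasing on $(0,\infty)$, it is order-preserving; and $v_n(t) > 0$ for every $n$ because $\mathbf{K}_n$ is positive definite, so \eqref{eq:var1} returns a strictly positive conditional variance whenever $t \notin \boldsymbol{\tau}_n$. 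Hence
\begin{align}
\argmax_{n\in\mathcal{N}} I_n(t) = \argmax_{n\in\mathcal{N}} \tfrac{1}{2}\log\!\left(2\pi e\, v_n(t)\right) = \argmax_{n\in\mathcal{N}} v_n(t),
\end{align}
which is exactly \eqref{eq:proposition}.

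The one delicate point — and the only place the argument needs care rather than routine substitution — is the treatment of the post‑probing law: strictly speaking the differential entropy of a point mass is $-\infty$, not $0$. I would handle this either by adopting the convention $H_n^1 \equiv 0$ as in the text, or, more carefully, by working with a finitely quantized channel state, for which the post‑probing (discrete) entropy is genuinely $0$ and the pre‑probing entropy remains a strictly increasing function of $v_n(t)$ up to quantization corrections that vanish with the bin width. In either case the conclusion is untouched, since the claim concerns only the relative ordering of the $I_n(t)$ across $n$, and that ordering is dictated entirely by the strictly increasing transformation of $v_n(t)$. This parallels the corresponding statement in~\cite{Alpcan:Valuetools11}.
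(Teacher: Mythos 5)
Your proof is correct and follows essentially the same route as the paper: set $H_n^1=0$ after probing, use the Gaussian posterior from GPR to get $I_n(t)=\tfrac{1}{2}\log(2\pi e\, v_n(t))$, and conclude by monotonicity of the logarithm. Your extra remarks on the strict positivity of $v_n(t)$ and on the point-mass/differential-entropy convention are careful additions the paper glosses over, but they do not change the argument.
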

\begin{proof}
Since $H_n^1(c_n(t)| t,\mathcal{D}_n(t))=0$ after  probing, $I_n(t)$
is simply
\begin{align}
I_n(t)=H_n^0(c_n(t)| t,\mathcal{D}_n(t))\label{eq:infomax},
\end{align}
Note that according to GPR a posterior distribution of state of
channel given $\mathcal{D}_n$ is
\begin{align}
p(c_n(t)| t,\mathcal{D}_n)\sim \mathcal{N}(\hat{c}_n(t);v_n(t)).
\end{align}
Then, the entropy of  this Gaussian distribution is given by,
\begin{align}
H_n^0(c_n(t)| t,\mathcal{D}_n)=\frac{1}{2}\log(2\pi e v_n(t)).
\end{align}
Hence,
\begin{align*}
i^*=\argmax_{n \in \mathcal{N}} I_n(t)=\argmax_{n \in \mathcal{N}}
v_n(t).
\end{align*}
\end{proof}

\section{Joint Scheduling and Probing Algorithms} \label{sec:algs}
Here, we define an algorithm  for solving problem
\eqref{eq:problem_m2} when $\hat{c}_n(t)$ and $v_n(t)$ are
calculated as described in the previous section.

\subsection{Joint Algorithm Under \textit{dynamic} feedback model}
For given, $M$, $\xi$, $\boldsymbol q(t)$, $\beta$, and
$\hat{c}_{n}(t)$ and $v_n(t)$ determined by GPR for each user at
every time slot $t$, Algorithm  gives
$\hat{\pi}_1^*=(\hat{n}^*,\hat{m}^*, \hat{S}^*_{\hat{m}^*} )$:

\begin{algorithmic}
\textbf{ Algorithm }:\\
\textit{(1) probing decision}:\\
For each value of $m$, $m=\{1,2,\dots,M\}$, the scheduler calculates
the following weights for ever user $n$,
\begin{align*}
J_m^n \triangleq q_n(t)\hat{d}_{n}(t) + \xi I_n(t).\notag
\end{align*}
Then, the scheduler sorts $J_m^n$ in a descending order and sums the
first $m$ weights. The maximum of the sums is the maximum of
\eqref{eq:problem_m2}. Then, the corresponding $m$ and the first $m$
users in the order gives $\hat{m}^*$ and
$\hat{S}^*_{\hat{m}^*}$, respectively.\\
\textit{(2) scheduling decision}:\\
The base station acquires CSI of each user in
$\hat{S}^*_{\hat{m}^*}$ and user $n^* \in \hat{S}^*_{\hat{m}^*}$ is
scheduled to transmit,
\begin{align*}
\hat{n}^*=\argmax_{n \in \hat{S}^*_{\hat{m}^*}} q_n(t)d_{n}(t).
\end{align*}
\end{algorithmic}
\begin{proposition}
\label{prop:prop_alg2} Algorithm  solves \eqref{eq:problem_m2}.
\end{proposition}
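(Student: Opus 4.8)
The plan is to check that the two-step algorithm performs an exact maximization over the feasible set $\mathcal F$, by decomposing the problem along the number of probed users and then showing that the ``sort and sum'' operation is the exact optimum of each resulting subproblem. Write $\Phi(\hat\pi)$ for the objective of \eqref{eq:problem_m2} after division by $\alpha_1$ (so the per-channel information weight is $\xi$, and by Proposition~\ref{prop:valuetools} the information collected from a probed user $n$ is the fixed scalar $I_n(t)=\tfrac12\log(2\pi e\,v_n(t))$).

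First I would decompose over $m$. A feasible policy $\hat\pi=(\hat n,\hat m,\hat S_{\hat m})$ is fully specified by its number of probed users $\hat m$, the probed set $\hat S_{\hat m}$ of that cardinality, and the scheduled user $\hat n\in\hat S_{\hat m}$; the only restriction is $\hat m\le M$. Hence $\mathcal F=\bigcup_{m=1}^{M}\mathcal F_m$, where $\mathcal F_m$ is the set of policies probing exactly $m$ users, and
\begin{align*}
\max_{\hat\pi\in\mathcal F}\Phi(\hat\pi)=\max_{1\le m\le M}\ \max_{\hat\pi\in\mathcal F_m}\Phi(\hat\pi).
\end{align*}
The outer maximum is exactly the ``maximum of the sums'' taken over $m$ in the probing-decision step of the algorithm, so it remains to show that for each fixed $m$ the algorithm returns the inner maximum.

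Next I would solve the subproblem obtained by fixing $m$. With $m$ fixed the data horizon $(1-m\beta)T_s$ is a constant, so $\hat d_n(t)=(1-m\beta)T_s\,\hat\mu_n(t)$ is a fixed per-user number; together with the additivity of the entropy/information term from Proposition~\ref{prop:valuetools}, $\Phi$ restricted to $\mathcal F_m$ reduces to the separable objective $\sum_{n\in S_m}J_m^n$ with $J_m^n=q_n(t)\hat d_n(t)+\xi I_n(t)$, the scheduled user entering only through the feasibility requirement $\hat n\in S_m$, which is always met. A standard exchange argument then shows $\sum_{n\in S_m}J_m^n$ over $m$-subsets is maximized by taking the $m$ largest values $J_m^n$: if $S_m$ omitted some $j$ with $J_m^j>J_m^i$ for some $i\in S_m$, swapping $i$ for $j$ cannot decrease the sum. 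This is precisely the ``sort in descending order and add the first $m$'' prescription. The scheduling-decision step then picks $\hat n^*=\argmax_{n\in\hat S^*_{\hat m^*}}q_n(t)d_n(t)$, i.e. the Max-Weight rule \eqref{eq:mw-model1} restricted to the committed probe set (evaluated on the true CSI, available once probing is done); since the scheduled user only needs to lie in $\hat S^*_{\hat m^*}$, this choice is feasible and does not change the optimal value of $\Phi$. Composing the two levels yields the claim.

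The step I expect to be the main obstacle is establishing, for each fixed $m$, that $\Phi$ genuinely collapses to the additive per-user sum $\sum_{n\in S_m}J_m^n$: that the service-rate contribution under the Max-Weight scheduling rule \eqref{eq:mw-model1}, together with the additive information term, combine so that the marginal value of probing user $n$ is exactly $q_n(t)\hat d_n(t)+\xi I_n(t)$, with no cross terms among probed users. Once this separability is in hand, the remaining ingredients are routine: the outer maximization over $m$ (which, by enumerating exactly $m$ probes for $m=1,\dots,M$, also subsumes the constraint $m\le M$ and any case where some $I_n(t)$ is non-positive), the greedy top-$m$ selection, and Max-Weight scheduling within the committed set. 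A secondary point needing care is that $J_m^n$ depends on $m$ through the factor $(1-m\beta)$ in $\hat d_n(t)$, so the ranking must be recomputed for every candidate $m$ before the outer maximum is taken; a single global sort would not suffice.
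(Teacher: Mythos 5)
The paper itself gives no argument to compare against (its proof is omitted as ``straightforward''), and your overall scaffolding --- decompose over $m\le M$, solve each fixed-$m$ subproblem by a greedy/exchange argument, then apply the Max-Weight rule inside the committed probe set --- is the natural way to formalize the claim and is surely what the authors intend. You also correctly flag a point the paper glosses over, namely that $J_m^n$ depends on $m$ through the factor $(1-m\beta)$ in $\hat d_n(t)$, so the ranking must be redone for each candidate $m$.

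However, the step you yourself single out as ``the main obstacle'' --- that for fixed $m$ the objective of \eqref{eq:problem_m2} collapses to the separable sum $\sum_{n\in S_m}J_m^n$ --- is asserted rather than proved, and under the paper's literal definitions it is false, so the gap is genuine. By \eqref{eq:r2}, $\hat r_n(\hat\pi,t)$ is nonzero only for the single scheduled user, so $\sum_{n}q_n(t)\hat r_n(\hat\pi,t)=q_{\hat n}(t)\hat d_{\hat n}(t)$, and the information term (once one reinterprets $\sum_{n=1}^N I_n(t)$, which as written is policy-independent, as the information actually acquired) contributes only over the probed set. The literal objective is therefore
$q_{\hat n}(t)\hat d_{\hat n}(t)+\xi\sum_{n\in S_m}I_n(t)$,
not $\sum_{n\in S_m}\bigl(q_n(t)\hat d_n(t)+\xi I_n(t)\bigr)$, and your exchange argument establishes optimality of the top-$m$ selection only for the latter. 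The two optimizations genuinely differ: adding to $S_m$ a user with large $q_n\hat d_n$ but negligible $I_n$ who will not be scheduled raises the $J$-sum, yet adds nothing to the scheduling term of \eqref{eq:problem_m2} while shrinking $(1-m\beta)$ and hence every $\hat d_n$, so the literal objective can strictly decrease; conversely, for the literal objective the optimal $S_m$ is (roughly) one ``scheduling candidate'' plus the $m-1$ users of largest information, which the sort on $J_m^n$ does not produce in general. To close the gap you must either (i) state and justify the reading of \eqref{eq:problem_m2} in which the backlog-rate term is summed over all probed users (this is the objective the algorithm actually maximizes, and then your proof goes through verbatim), or (ii) keep the literal objective and replace the top-$m$/exchange step by the different subset-selection argument it requires. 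A secondary loose end in the same spirit: your remark that the final scheduling step, which uses the true $d_n(t)$ rather than $\hat d_n(t)$, ``does not change the optimal value'' is only true under interpretation (i), where the objective is independent of which member of $\hat S^*_{\hat m^*}$ is scheduled; under the literal reading it would matter.
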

\begin{proof}
The proof is straightforward and it is omitted here due to lack of
space.
\end{proof}

\section{Numerical Analysis}
\label{sec:sim} In our simulations, we model a single cell CDMA
downlink transmission utilizing high data rate (HDR). The base
station serves keeps a separate queue for each user. Time is slotted
with length $T_s=1.67$ ms as defined in HDR specifications. Packets
arrive at each slot according to Bernoulli distribution. For all
simulations, the wireless channel is modeled as \textit{correlated}
Rayleigh fading according to Jakes's model. Each user has different
Doppler frequency roughly characterizing how fast its channel
changes. The sampling rate for the simulations is 600Hz which also
corresponds to the slot size of HDR. Finally, the channel process is
non-stationary, i.e, the mean of the channel gain changes over time.
The transmission rate and the number of bits of a user transmits is
given as,
\begin{align}
\mu_n(t)&=\textrm{BW}\log_2\left(1 + \textrm{SNR}\times c_n(t)\right),\notag\\
R_n(t)&=T_s\times\mu_n(t)\notag,
\end{align}
where $\textrm{BW}$ is the channel bandwidth  set to
$\textrm{BW}=1.25$ MHz. The base station has power control to set
Signal-to-Noise-Ratio $\textrm{SNR}=10$ dB.

\begin{figure}
     \centering
     \includegraphics[width=0.9\columnwidth]{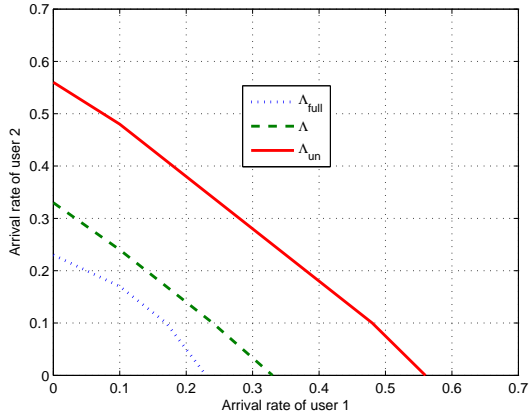}
     \caption{Rate regions under dynamic feedback model. }
     \label{fig:regions_sim}
\end{figure}
In the first simulation, we demonstrate the rate region of both
algorithms. There are only two users and the probing cost is
$\beta=0.3$. The arrival process for each user is again assumed to
be Bernoulli with a packet size of 631 Bytes. As depicted in
Figure~\ref{fig:regions_sim}, $\Lambda_{un}$ represents the
hypothetical rate region which is obtained when Max-Weight with full
CSI is used for scheduling. The boundary of this region cannot be
achieved in practice since $\beta$ is never zero. On the other
extreme, when all channels are probed at every slot without
neglecting the cost of probing, we obtain a rate region given as
$\Lambda_{full}$.  Meanwhile, Algorithm achieves the rate region
$\Lambda$. Clearly, by predicting the channel states by employing
Algorithm , we can increase the achievable rate region beyond
$\Lambda_{full}$. This is because by reducing the number of channels
probed at every slot, we can use a larger portion of time slot for
transmission of data.

\begin{figure}
     \centering
     \includegraphics[width=0.9\columnwidth]{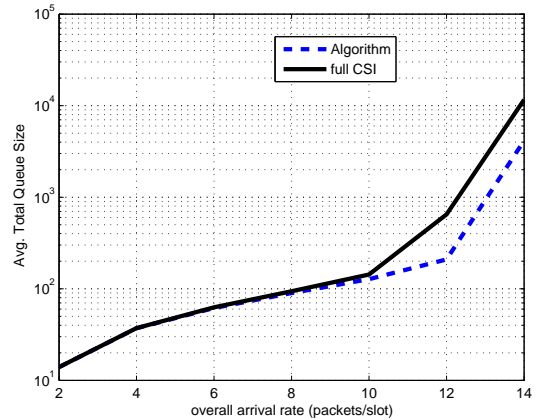}
     \caption{Average total queue sizes vs. overall mean arrival rate. }
     \label{fig:queuesize}
\end{figure}
Next, we show the performance of Algorithm when there are 20 users
in the network. The size of a packet is set to 128 bytes which
corresponds to the size of an HDR packet. Figure~\ref{fig:queuesize}
depicts the sum of the queue lengths vs. the overall arrival rate
when $\beta=0.02$. Clearly, as the overall arrival rate exceeds 10
packets/slot queue sizes suddenly increase within full CSI case and
the network becomes unstable. However, Algorithm improves over Max
Weight with full CSI by supporting the overall arrival rate of up to
12 packets/slot. Therefore,the proposed algorithm can achieve larger
rate region.

\section{Conclusion}
\label{sec:conclusion}  We have developed joint scheduling channel
probing algorithms for time-correlated and stationary/non-stationary
wireless channels. The proposed algorithm has been designed for the
channel probing model where the acquiring CSI of a use requires
$\beta$ fraction of the time slot. The proposed algorithm first
decides the set of channels that must be probed at the beginning of
each time slot. The set of channels is determined by considering not
only the queue sizes and estimated transmission rate but also the
information on each channel. We apply Gaussian Process technique to
predict CSI at each time slot based on the previous actual CSI
observed. In simulation results, we show that by applying GPR with
the proposed algorithm, the network can carry higher user traffic.


\bibliographystyle{IEEEtran}
\bibliography{IEEEabrv,ref}

\begin{thebibliography}{10}
\providecommand{\url}[1]{#1}
\csname url@samestyle\endcsname
\providecommand{\newblock}{\relax}
\providecommand{\bibinfo}[2]{#2}
\providecommand{\BIBentrySTDinterwordspacing}{\spaceskip=0pt\relax}
\providecommand{\BIBentryALTinterwordstretchfactor}{4}
\providecommand{\BIBentryALTinterwordspacing}{\spaceskip=\fontdimen2\font plus
\BIBentryALTinterwordstretchfactor\fontdimen3\font minus
  \fontdimen4\font\relax}
\providecommand{\BIBforeignlanguage}[2]{{%
\expandafter\ifx\csname l@#1\endcsname\relax
\typeout{** WARNING: IEEEtran.bst: No hyphenation pattern has been}%
\typeout{** loaded for the language `#1'. Using the pattern for}%
\typeout{** the default language instead.}%
\else
\language=\csname l@#1\endcsname
\fi
#2}}
\providecommand{\BIBdecl}{\relax}
\BIBdecl

\bibitem{MW}
L.~Tassiulas and A.~Ephremides, ``Stability properties of constrained queueing
  systems and scheduling policies for maximum throughput in multihop radio
  networks,'' \emph{{IEEE} Trans. Autom. Control}, vol.~37, no.~12, pp. 1936 --
  1948, Dec. 1992.

\bibitem{Bozidar:nonstationary11}
B.~Radunovic, A.~Proutiere, D.~Gunawardena, and P.~Key, \emph{Exploiting
  Channel Diversity in White Spaces}.\hskip 1em plus 0.5em minus 0.4em\relax
  Technical Report MSR-TR-2011-53, 2011.

\bibitem{Rasmussen:GP}
C.~E. Rasmussen and C.~K.~I. Williams, \emph{Gaussian Processes for Machine
  Learning (Adaptive Computation and Machine Learning).}\hskip 1em plus 0.5em
  minus 0.4em\relax The MIT Press, 2005.

\bibitem{Alpcan:Valuetools11}
T.~Alpcan, ``A framework for optimization under limited information,'' in
  \emph{5th Intl. Conf. on Performance Evaluation Methodologies and Tools
  (ValueTools)}, 2011.

\bibitem{Gopalan:allerton07}
A.~Gopalan, C.~Caramanis, and S.~Shakkottai, ``On wireless scheduling with
  partial channel-state information,'' in \emph{Proc. of the 45th Allerton
  Conference}, 2007.

\bibitem{Ouyang:mobihoc11}
M.~Ouyang and L.~Ying, ``On optimal feedback allocation in multichannel
  wireless downlinks,'' in \emph{Mobihoc}, 2011.

\bibitem{Quyang:infocom11}
W.~Ouyang, S.~Murugesan, A.~Eryilmaz, and N.~B. Shroff, ``Exploiting channel
  memory for joint estimation and scheduling in downlink networks,'' in
  \emph{INFOCOM}, 2011.

\bibitem{Chaporkar:mobihoc09}
P.~Chaporkar, A.~Proutiere, H.~Asnani, and A.~Karandikar, ``Scheduling with
  limited information in wireless systems,'' in \emph{Mobihoc}, 2009.

\bibitem{Hallen:fading}
A.~Duel-Hallen, ``Fading channel prediction and estimation for mobile radio
  adaptive transmission systems,'' \emph{Proc. of IEEE}, vol.~95, no.~12, pp.
  2299--2313, 2007.

\bibitem{Andrews:Nonstationary}
M.~Andrews and L.~Zhang, ``Scheduling over nonstationary wireless channels with
  finite rate sets,'' \emph{{IEEE/ACM} Trans. Netw.}, vol.~14, no.~5, pp.
  1067--1077, 2006.

\end{thebibliography}


\end{document}